\numberwithin{equation}{section}
\newtheorem{definition}{Definition}[section]
\newtheorem{theorem}[definition]{Theorem}
\newtheorem{remarkth}[definition]{Remark}
\renewcommand{\emph}[1]{{\bfseries\itshape{#1}}}
\numberwithin{figure}{section}
\newcommand{\R}{\mathbb{R}}      
\newcommand{\ltilde}[3][0]{\altura=0 \advance\altura by #1
           \ancho=#2 \anchom=\ancho \divide\anchom by 2
           \anchoa=\ancho \divide\anchoa by 4
           \anchob=\anchom \advance\anchob by \anchoa
           \kern-3pt \begin{array}[b]{c}
           \begin{picture}(1,1)(\anchom,-\altura)
        \qbezier(0,2)(\anchoa,5)(\anchom,2)
        \qbezier(\anchom,2)(\anchob,-1)(\ancho,4)
        \qbezier(0,2)(\anchoa,4.5)(\anchom,1.8)
        \qbezier(\anchom,1.8)(\anchob,-1.5)(\ancho,4)
       \end{picture} \\[-4pt]{#3}
                       \end{array} \kern-4pt    }
\newcommand{\lhat}[3][0]{\altura=0 \advance\altura by #1
           \ancho=#2 \anchom=\ancho \divide\anchom by 2
           \anchoa=\ancho \divide\anchoa by 4
           \anchob=\anchom \advance\anchob by \anchoa
           \kern-3pt \begin{array}[b]{c}
           \begin{picture}(1,1)(\anchom,-\altura)
        \qbezier(0,2)(\anchoa,4)(\anchom,6)
        \qbezier(\anchom,6)(\anchob,4)(\ancho,2)
        \qbezier(0,2)(\anchoa,3.8)(\anchom,5.6)
        \qbezier(\anchom,5.6)(\anchob,3.8)(\ancho,2)
       \end{picture} \\[-4pt] {#3}
                       \end{array} \kern-4pt    }
\newcommand{\I}{I\mkern-7muI}
\newcommand\prol{\@ifstar{\@proldf}{\@prolpf}}  
\def\@prolpf{\@ifnextchar[{\@prolpf@wrt}{\@prolpf@}}
\def\@prolpf@wrt[#1]#2{\@ifnextchar[{\@prolpf@wrt@at{#1}{#2}}{\@prolpf@wrt@{#1}{#2}}}
\def\@prolpf@wrt@at#1#2[#3]{\prolsymbol^{#1}_{#3}#2}
\def\@prolpf@wrt@#1#2{\prolsymbol^{#1}#2}
\def\@prolpf@#1{\@ifnextchar[{\@prolpf@at{#1}}{\@prolpf@@{#1}}}
\def\@prolpf@at#1[#2]{\prolsymbol_{#2}#1}
\def\@prolpf@@#1{\prolsymbol#1}
\def\@proldf{\@ifnextchar[{\@proldf@wrt}{\@proldf@}}
\def\@proldf@wrt[#1]#2{\@ifnextchar[{\@proldf@wrt@at{#1}{#2}}{\@proldf@wrt@{#1}{#2}}}
\def\@proldf@wrt@at#1#2[#3]{\prolsymbol^{*#1}_{#3}#2}
\def\@proldf@wrt@#1#2{\prolsymbol^{*#1}#2}
\def\@proldf@#1{\@ifnextchar[{\@proldf@at{#1}}{\@proldf@@{#1}}}
\def\@proldf@at#1[#2]{\prolsymbol^*_{#2}#1}
\def\@proldf@@#1{\prolsymbol^*#1}
\def\prolsymbol{\mathcal{T}}
\newcommand\rmd{\mathrm{d}}
\begin{document}
{\Large

\title{The inhomogeneous Suslov problem}


\author[L. C. \ Garc\'{\i}a-Naranjo]{Luis C. \ Garc\'{\i}a-Naranjo}
\address{L. C.\ Garc\'{\i}a-Naranjo:
Departamento de Matem\'aticas y Mec\'anica \\
IIMAS-UNAM \\
Apdo Postal 20-726,  Mexico City,  01000, Mexico}
\email{luis@mym.iimas.unam.mx}

\author[A. J. \ Maciejewski ]{Andrzej J. Maciejewski }
\address{A. J. \ Maciejewski:
J. Kepler Institute of Astronomy \\
University of Zielona G\'ora \\
Licealna 9, 65-417 Zielona G\'ora, Poland}
\email{andrzej.j.maciejewski@gmail.com}

\author[J.\ C.\ Marrero]{Juan C.\ Marrero}
\address{Juan C.\ Marrero:
ULL-CSIC Geometr\'{\i}a Diferencial y Mec\'anica Geom\'etrica\\
Departamento de Matem\'atica Fundamental, Facultad de
Ma\-te\-m\'a\-ti\-cas, Universidad de la Laguna, La Laguna,
Tenerife, Canary Islands, Spain} \email{jcmarrer@ull.es}

\author[M. Przybylska]{Maria Przybylska}
\address{M. Przybylska:
Institute of Physics \\
University of Zielona G\'ora \\
Licealna 9, 65-417 Zielona G\'ora, Poland}
\email{maria.przybylska@gmail.com}

\thanks{This work has been partially supported by MEC (Spain)
 MTM2011-15725-E, MTM2012-34478 and the project of the Canary Government ProdID20100210. 
LGN acknowledges the hospitality at the  Departamento de Matem\'atica Fundamental,
at Universidad de la Laguna, for its hospitality in June 2013.
}

\keywords{nonholonomic mechanical systems, 
invariant volume forms, integrability, Suslov problem}

\subjclass[2010]{37C40,37J60,70F25,70E40}

\begin{abstract}
We consider the Suslov problem of nonholonomic rigid body motion with inhomogeneous constraints. We show that if the direction along which the Suslov
constraint is enforced is perpendicular to a principal axis of inertia of the body, then the reduced equations are integrable and, in the generic case, possess a smooth
invariant measure. Interestingly, in this generic case, the  first integral that permits integration is transcendental and the density of the invariant measure depends on the angular
velocities. We also study the Painlev\'e property of the solutions.
\end{abstract}

\maketitle

\tableofcontents

\section{Definition of the problem}

Consider the motion of a rigid body under its own inertia subjected to
the constraint
\begin{equation*} {\bf a} \cdot {\boldsymbol \Omega}=K,
\end{equation*}
where $K\in \R$ is constant. In the above, ${\bf a}\in \R^3$ is a fixed unit vector
in the body frame and ${\boldsymbol \Omega}\in \R^3$ is the angular velocity
of the body also written in the body frame. In the case where $K=0$ we
recover the classical nonholonomic Suslov problem.

Apparently Suslov \cite{Suslov} suggested a mechanism to physically
implement such a constraint that is described in
\cite{BorMamKil}.

Denote by $\I$ the inertia tensor of the body. It is a symmetric,
positive definite $3\times 3$ matrix. The equations of motion are
obtained via the Lagrange d'Alembert principle that yields
\begin{equation}
  \label{E:motion}
  \I  \dot {\boldsymbol \Omega} =\I {\boldsymbol \Omega}\times {\boldsymbol \Omega} +\lambda {\bf a},
\end{equation}
where the Lagrange multiplier $\lambda$ is determined by the condition
that the constraint is satisfied and ``$\times$" denotes the vector product in $\R^3$.

Differentiating the constraint and using the equation of motion we
obtain
\begin{equation*}
  \lambda=-\frac{\left (\I {\boldsymbol \Omega}\times {\boldsymbol \Omega}  \right ) \cdot \I^{-1}{\bf a}}{ {\bf a} \cdot \I^{-1}{\bf a}}.
\end{equation*}

With the above choice of $\lambda$ the equations of motion
\eqref{E:motion} preserve the quantity ${\bf a} \cdot {\boldsymbol
  \Omega}$.  The physical system of interest is obtained by
considering the motion on the level set ${\bf a} \cdot {\boldsymbol
  \Omega}=K$.

Note that the energy of the system, $H=\frac{1}{2}\I{\boldsymbol
  \Omega} \cdot {\boldsymbol \Omega} $ is only preserved on the level
set $K=0$. The inhomogeneous constraint adds or takes away energy from
the system.

We will assume that the body frame is oriented in such way that the
vector ${\bf a}=(0,0,1)$. The constraint is then $\Omega_3=K$. Without
loss of generality, we can also assume that the entry $I_{12}$ of the
inertia tensor vanishes. Thus, the inertia tensor has the form
\begin{equation*}
  \I=\left ( \begin{array}{ccc} I_{11} & 0 & I_{13} \\ 0 & I_{22} & I_{23} \\I_{13}  & I_{23} & I_{33} \end{array}\right ).
\end{equation*}
In this case we find that the equations for $\Omega_1, \, \Omega_2$ on
the level set $\Omega_3=K$ are given by:
\begin{equation}
  \label{E:MotionGeneral}
  \begin{split}
    I_{11} \dot \Omega_1 &= -\Omega_2(I_{13}\Omega_1+I_{23}\Omega_2) + \Omega_2(I_{22}K-I_{33}K)+I_{23}K^2, \\
    I_{22}\dot \Omega_2 &=\Omega_1(I_{13}\Omega_1+I_{23}\Omega_2)
    +\Omega_1(-I_{11}K+I_{33}K)-I_{13}K^2.
  \end{split}
\end{equation}

The case where $K=0$ corresponds to the classical Suslov problem that
has been studied in detail. In this case there are two distinct cases
of qualitative motion.

\begin{enumerate}
\item If the vector $\bf a$ is an eigenvector of the inertia tensor
  $\I$, then $\I$ is diagonal ($I_{13}=I_{23}=0$) and the dynamics is
  trivial. The angular velocity is constant so the body rotates about
  a fixed axis with constant speed.
\item If the vector $\bf a$ is not an eigenvector of the inertia
  tensor $\I$, then the system possesses a straight line of asymptotic
  equilibria. Using the conservation of energy, the equations of
  motion are integrated in terms of hyperbolic functions. In this case
  there is no smooth invariant measure.  For a discussion of the
  motion of the body in this case see \cite{FMP}.

\end{enumerate}

In this note we consider the case where $K$ is non-zero. Note that
$\frac{1}{K}$ is a natural time scale for the system, so we introduce
the non-dimensional variables
\begin{equation*}
  \tau=Kt, \qquad \omega_1=\frac{1}{K}\Omega_1, \qquad \omega_2=\frac{1}{K}\Omega_2.
\end{equation*}
The system \eqref{E:MotionGeneral} becomes
\begin{equation}
  \label{E:MotionNondim}
  \begin{split}
    I_{11}  \omega_1' &= -\omega_2(I_{13}\omega_1+I_{23}\omega_2) + \omega_2(I_{22}-I_{33})+I_{23}, \\
    I_{22} \omega_2' &=\omega_1(I_{13}\omega_1+I_{23}\omega_2)
    +\omega_1(-I_{11}+I_{33})-I_{13},
  \end{split}
\end{equation}
where $'=\frac{d}{d\tau}$.

For the rest of the paper we will analyze the system
\eqref{E:MotionNondim} depending on the position of the vector $\bf a$
relative to the principal axes of inertia of the body. We consider two
cases, the simplest one when the vector $\bf a$ is an eigenvector of
the inertia tensor $\I$, and the second one, when ${\bf a}$ belongs to
a two-dimensional eigenspace of $\I$ but is not an eigenvector. The
analysis for a generic $\bf a$ will be postponed for a subsequent
publication.

\section{Suppose that ${\bf a}$ is an eigenvector of $\I$}
\label{S:a-eigenvector}

The simplest case of motion also occurs when ${\bf a}$ is an
eigenvector of $\I$. In this case $I_{13}=I_{23}=0$ and the equations
of motion become linear:
\begin{equation*}
  \begin{split}
    I_{11} \omega_1' &=  (I_{22}-I_{33})\omega_2, \\
    I_{22} \omega_2' &=(-I_{11}+I_{33})\omega_1.
  \end{split}
\end{equation*}
The trace of the associated constant matrix is zero and its
determinant equals
\begin{equation*}
  \frac{(I_{22}-I_{33})(I_{11}-I_{33})}{I_{11}I_{22}}.
\end{equation*}
The above determinant is greater than zero if either $I_{11},\, I_{22}
>I_{33}$ or $I_{11},\, I_{22} <I_{33}$. So we conclude that if ${\bf
  a}$ is an eigenvector of the inertia tensor, along the axis
corresponding to the largest or smallest moment of inertia, then we
have simple-harmonic motion in the $\omega_1, \omega_2$ plane.

Similarly, if ${\bf a}$ points along the axis of middle inertia, then
we have a linear saddle in the $\omega_1, \omega_2$ plane.  The
dynamics in the case where the body has rotational symmetry and some
of the principal moments of inertia coincide can be easily understood.

\section{Suppose that ${\bf a}$ belongs to a two-dimensional eigenspace of
  $\I$}

In this section we consider the case when the vector ${\bf
  a}$ belongs to the two-dimensional space spanned by two of the
principal axes of inertia of the body, but is not aligned with any of
them. This is equivalent to saying that the vector $\bf a$ is
perpendicular to a principal axis of inertia but without defining one
of them.

We suppose that $I_{13}=0$ but $I_{23}\neq 0$. Under these
assumptions, the principal moments of inertia of the body are
\begin{equation}
\label{E:Eigenvalues}
\begin{split}
  J_1=I_{11}, \qquad J_2=\frac{1}{2}(I_{22}+I_{33})+\frac{1}{2}\sqrt{(I_{22}-I_{33})^2+4I_{23}^2} , \\
  J_3=\frac{1}{2}(I_{22}+I_{33})-\frac{1}{2}\sqrt{(I_{22}-I_{33})^2+4I_{23}^2},
\end{split}
\end{equation}
and the vector ${\bf a}$ belongs to the two dimensional eigenspace of
$\I$ spanned by the principal axes of inertia of the body associated
to $J_2$ and $J_3$. In other words, ${\bf a}$ is orthogonal to the
principal axis of inertia associated to $J_1$.

The equations of motion \eqref{E:MotionNondim} simplify to:
\begin{equation}
  \label{E:Motion-simpler}
  \begin{split}
    J_1  \omega_1' &= -I_{23}\omega^2_2 + \omega_2(I_{22}-I_{33})+I_{23}, \\
    I_{22}\omega_2' &=\omega_1\left (I_{23}\omega_2+(I_{33}-J_1)
    \right ).
  \end{split}
\end{equation}

The system possesses a particular solution of the form
\begin{equation*}
  \label{E:partsol}
  \omega_2=\frac{J_1-I_{33}}{I_{23}}, \qquad \omega_1=-\frac{(J_1-J_2)(J_1-J_3)}{I_{23}J_1}t+c_0,
\end{equation*}
where $c_0$ is an arbitrary constant. Hence, the horizontal line
$\omega_2=\frac{J_1-I_{33}}{I_{23}}$ is invariant by the flow and so
are the semi-planes
\begin{equation*}
  \omega_2>\frac{J_1-I_{33}}{I_{23}} \qquad \mbox{and} \qquad \omega_2<\frac{J_1-I_{33}}{I_{23}}.
\end{equation*}
 
At this point, we divide our analysis in two separate cases depending
on whether $J_1$ coincides with either of $J_2$ or $J_3$, or not.

\subsection{Suppose that $J_1\neq J_2, J_3$.}
\label{SS:Unequal-Moments}
The system
\eqref{E:Motion-simpler} possesses the integral of motion
\begin{equation}
  \label{E:ConsQty}
  F(\omega_1,\omega_2)=(I_{23}\omega_2+I_{33}-J_1)\exp \left (\frac{I_{23}^2\left (J_1\omega_1^2+I_{22}\left (\omega_2-\frac{(I_{22}-J_1)}{I_{23}} \right )^2 \right )}{2I_{22}(J_1-J_2)(J_1-J_3)} \right ).
\end{equation}
The exponential dependence on the integral is remarkable considering
that the system is only polynomial. To our knowledge this is the first
example of a transcendental first integral in a polynomial mechanical
system.

The existence of the above integral implies that the system is
integrable by quadratures. It will be shown in Section \ref{S:Painleve} ahead that in this case there are solutions to
the system that  are multi-valued functions of complex time.

The system also possesses the following invariant measure:
\begin{equation*}
  \mu=\exp \left (\frac{I_{23}^2\left (J_1\omega_1^2+I_{22}\left (\omega_2-\frac{(I_{22}-J_1)}{I_{23}} \right )^2 \right )}{2I_{22}(J_1-J_2)(J_1-J_3)} \right )\, \mathrm{d}\omega_1 \, \mathrm{d}\omega_2.
\end{equation*}
Note that the density of the invariant measure is non trivial and
depends on the velocities. Something that is rare in mechanical
systems.

If $J_1$ is the middle moment of inertia, then the density decays to
zero as $(\omega_1, \omega_2)$ goes to infinity, and the measure of
the entire phase space is finite. On the other hand, if $J_1$ is the
largest or the smallest moment of inertia, the density goes to
infinity as $(\omega_1, \omega_2)$ goes to infinity, and the measure
of the entire phase space is infinite.

The system possesses exactly two equilibrium points located at
\begin{equation*}
  \omega_1=0, \qquad \omega_2^\pm=\left ( \frac{I_{22}-I_{33}\pm \sqrt{(I_{22}-I_{33})^2+4I_{23}^2}}{2I_{23}}\right ).
\end{equation*}
Under our assumption that $J_1\neq J_2, J_3$, none of these equilibria
lie on the line $\omega_2=\frac{J_1-I_{33}}{I_{23}}$.
 The eigenvalues of the Jacobi matrices at these equilibria are equal to 
$\pm\lambda_+$, and $\pm\lambda_-$,  where
\begin{equation}
\label{E:eigenvalues}
  \lambda_+^2=\frac{(J_1-J_2)(J_2-J_3)}{J_1I_{22}},\quad
  \lambda_-^2=\frac{(J_1-J_3)(J_3-J_2)}{J_1I_{22}}.
\end{equation}
 
 To understand the stability of the equilibria we fix the values of
$I_{22}, I_{33}$ and $I_{23}$ and use $J_1$ as a bifurcation
parameter.  
The bifurcation points correspond to $J_1=J_2$ and
$J_1=J_3$ where either $\lambda_+$ or $\lambda_-$ vanishes . The stability of the equilibria is easily determined
using the form of the  eigenvalues \eqref{E:eigenvalues} and the conserved
quantity \eqref{E:ConsQty}.  The global behaviour of trajectories
in the phase space $\omega_1 \omega_2$ is shown in Figure~\ref{fig:suslov13gen}.  The corresponding bifurcation
diagram is given in Figure~\ref{F:bifdiag} under the assumption that $J_2-J_3<J_3<I_{33}<I_{22}<J_2<J_2+J_3$. Note that,
by the triangle inequality for the principal moments of inertia, we have
the physical restriction for the values of $J_1$:
\begin{equation*}
  J_2-J_3\leq J_1 \leq J_2+J_3.
\end{equation*}

\begin{figure}[htp]
  \centering \subfigure[$J_1=1.6$]{
    \includegraphics[scale=0.44]{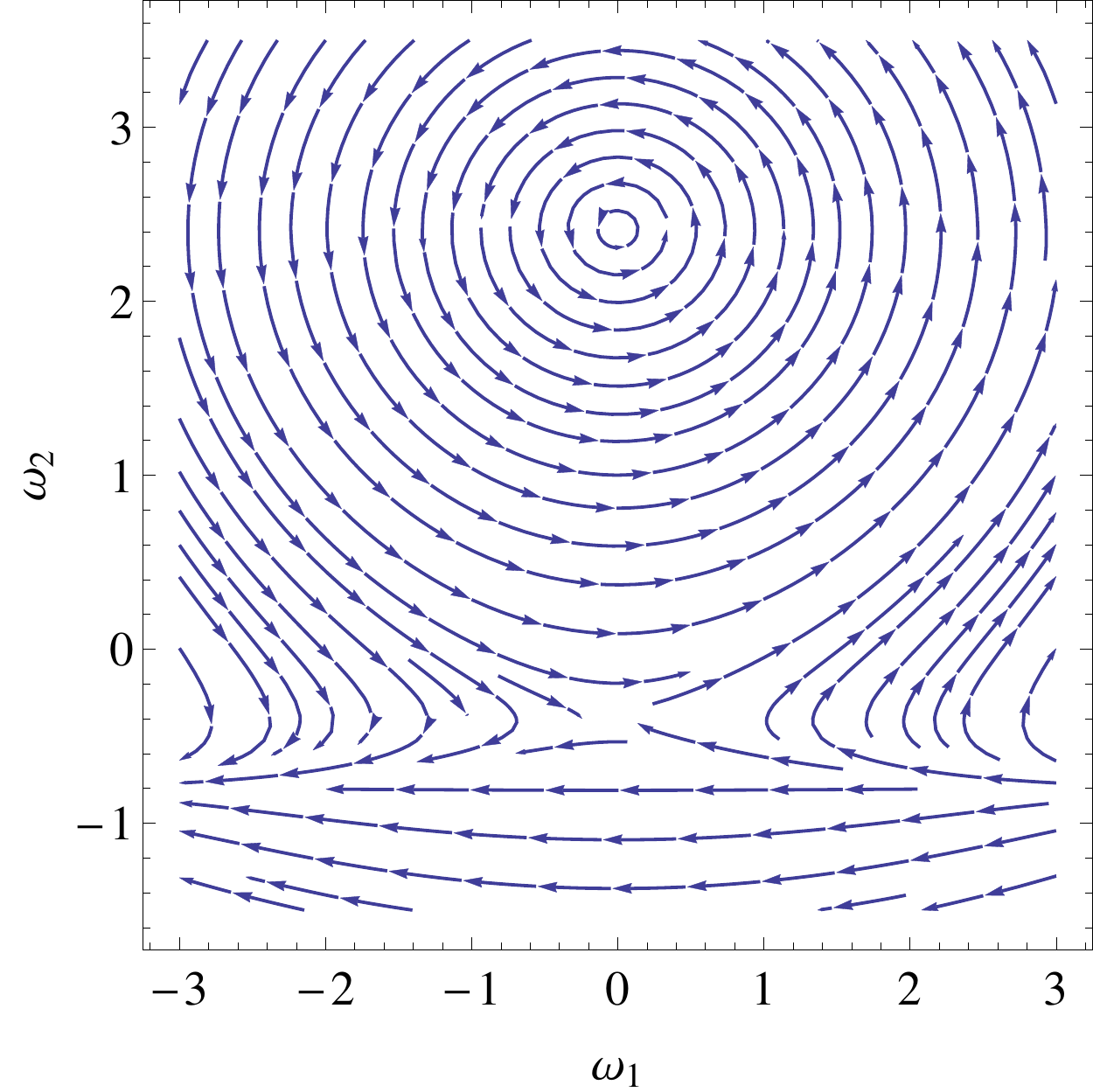}
    \label{fig:suslov13gen1}
  } \subfigure [$J_1=2.5$]{
    \includegraphics[scale=0.44]{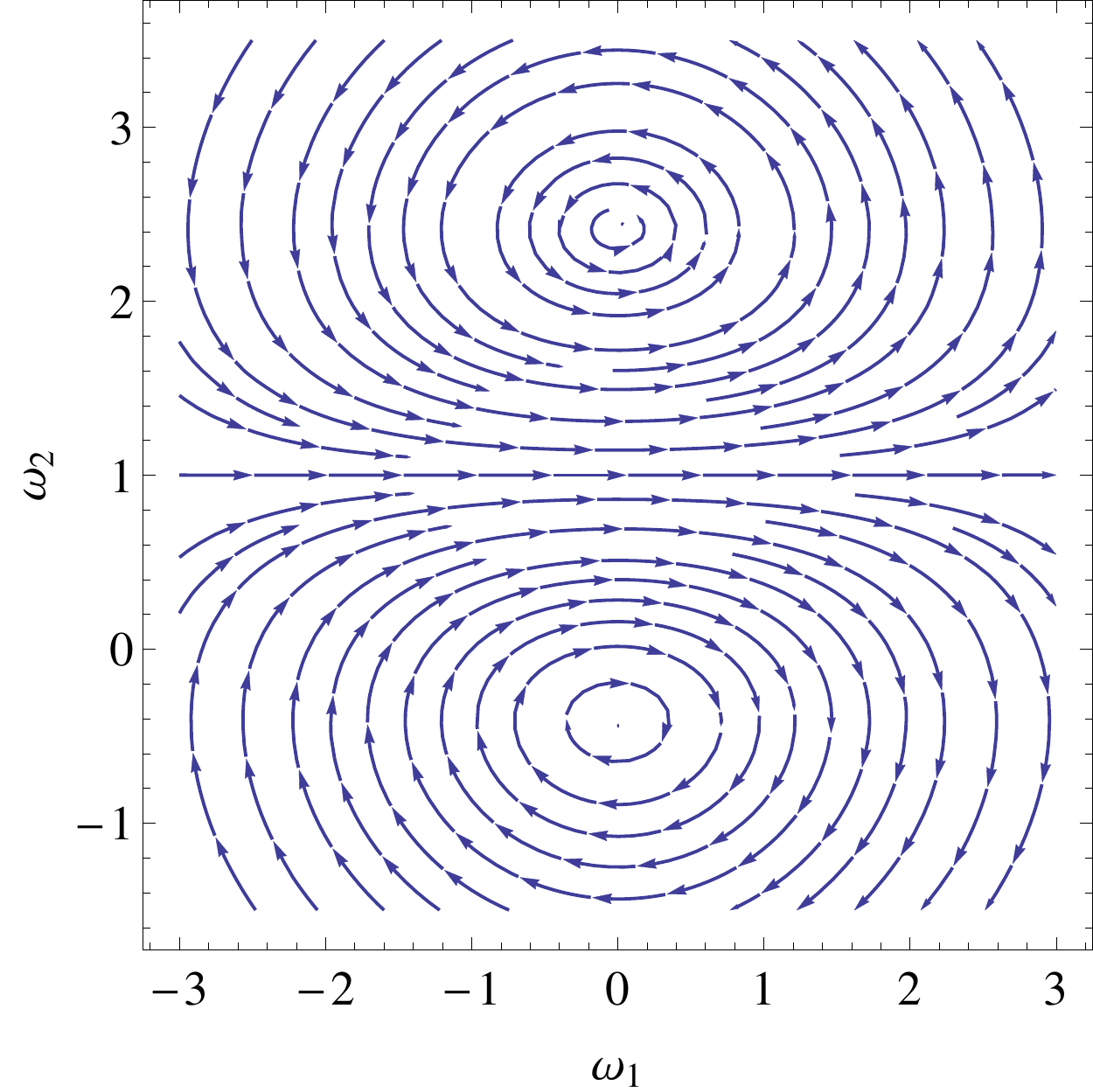}
    \label{fig:suslov13gen2}
  } \subfigure [$J_1=4$]{
    \includegraphics[scale=0.44]{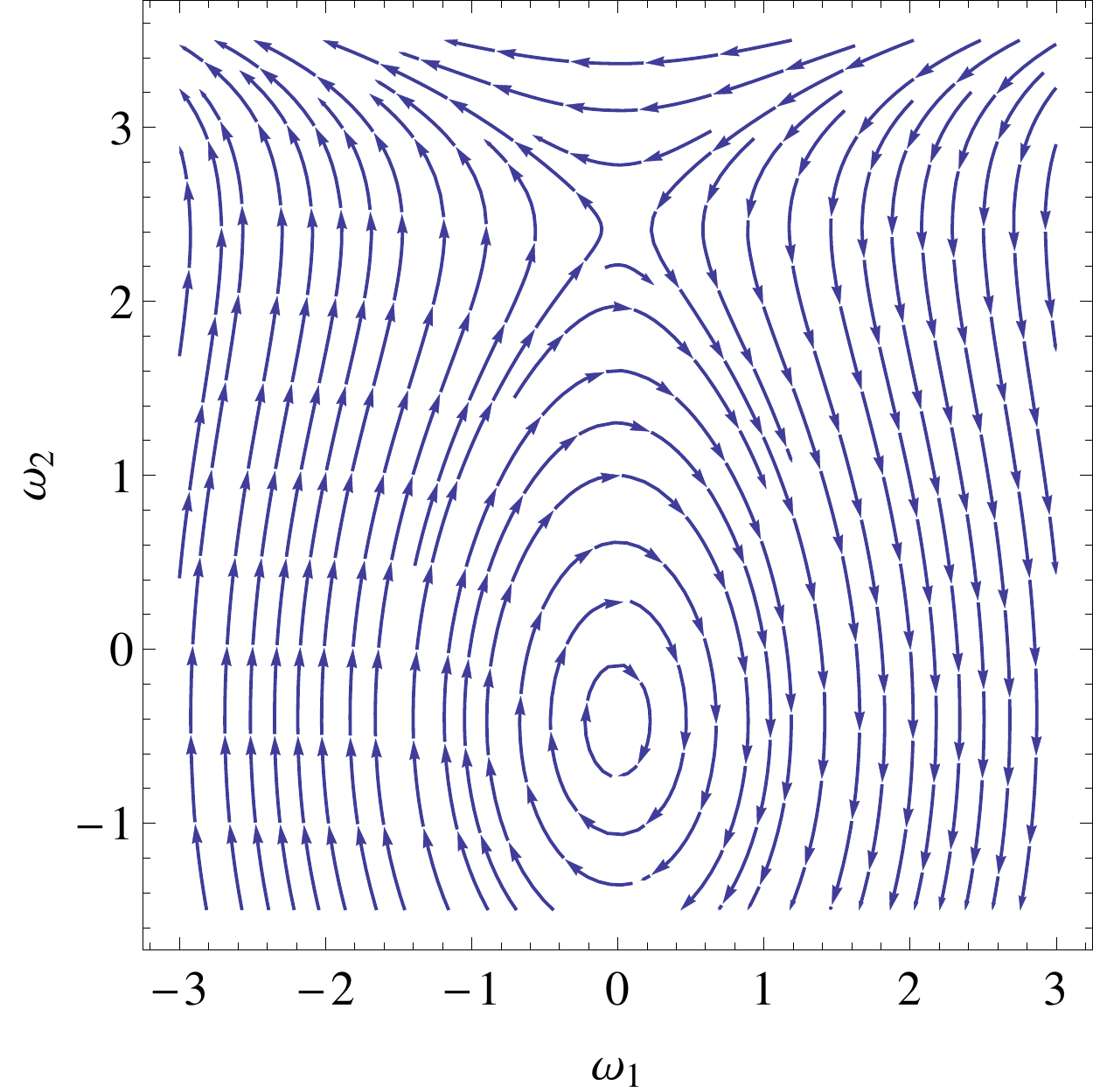}
    \label{fig:suslov13gen2}
  }
  \caption{Phase portraits of  affine Suslov problem with $J_2\neq
    J_1$ and $J_3\neq J_1$. Values of parameters: $I_{22}=3$,
    $I_{33}=2$, $I_{23}=1/2$. \label{fig:suslov13gen}}
\end{figure}

\begin{figure}[ht]
  \centering
  \includegraphics[width=10cm]{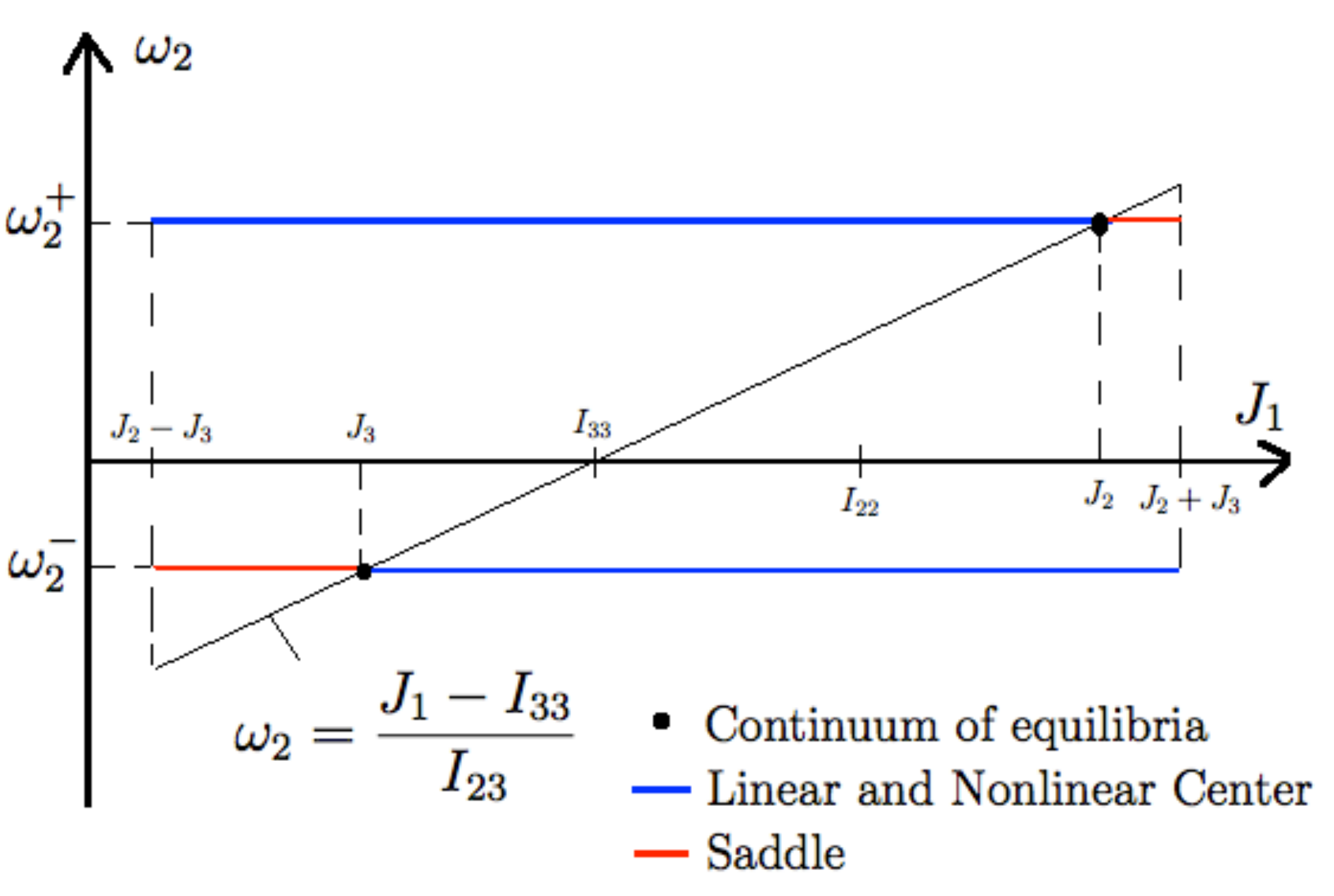}
  \caption{{Bifurcation diagram under the assumption
      $J_2-J_3<J_3<I_{33}<I_{22}<J_2<J_2+J_3$. This is physically
      attainable (for example if $I_{22}=3, \, I_{33}=2, \,
      I_{23}=1/2$).}\label{F:bifdiag} }
\end{figure}

\subsection{Suppose that $J_1= J_2$ or $J_1=J_3$.} 
\label{SS:EqualMoments}
 Define the quantity
\begin{equation}
\label{E:DefP}
P=I_{23}^2 - (J_1 - I_{22}) (J_1 - I_{33}).
\end{equation}
In view of \eqref{E:Eigenvalues}, the condition that $J_1= J_2$ or $J_1=J_3$ is equivalent to saying that  $P=0$.
In this section we will also work under  the assumption that $J_1\neq I_{33}$ because otherwise, the condition that
$P=0$ implies $I_{23}=0$ that brings us back to the case discussed in Section \ref{S:a-eigenvector}.
Therefore we can write
\begin{equation}
  I_{22}=\frac{J_{1}^2-I_{23}^2-J_{1}I_{33}}{J_{1}-I_{33}}.
  \label{eq:I22}
\end{equation}
Under this assumption, $J_1$ is obviously a multiplicity two eigenvalue of the inertia tensor. The other
eigenvalue of $\I$ is given in terms of $J_1, \, I_{33}$ and $I_{23}$ by
\begin{equation}
\label{E:ThirdEigenvalue}
J_0=\frac{J_1^2-J_1I_{33}-I_{23}^2}{J_1-I_{33}}.
\end{equation}

Substitution of \eqref{eq:I22} into \eqref{E:Motion-simpler} yields the following set of equations that possess a common
factor
\begin{equation}
  \omega'_1=\dfrac{(I_{23}\omega_2+I_{33}-J_1)(I_{33}\omega_2-J_1\omega_2-I_{23})}{J_1(J_1-I_{33})},\qquad \omega_2'=\frac{(J_1-I_{33})\omega_1(I_{23}\omega_2+I_{33}-J_1)}{J_1^2-I_{23}^2-J_1I_{33}}.
  \label{eq:spec}
\end{equation}

Notice that under our assumption that $J_1=J_2$ or $J_1=J_3$, the first
integral~\eqref{E:ConsQty} becomes indeterminate. However, in this case, there  exists
another one, quadratic in $\omega_1$ and $\omega_2$. We can obtain it
 using separation of variables in \eqref{eq:spec}.
 Namely, we can write
\[
\dfrac{\mathrm{d}\omega_1}{\mathrm{d}\omega_2}=\dfrac{(I_{33}\omega_2-J_1\omega_2-I_{23})(J_1^2-I_{23}^2-J_1I_{33})}{J_1(J_1-I_{33})^2\omega_1}.
\]
Now we use separation of variables to get
\[
 J_1(J_1-I_{33})^2\omega_1\, \mathrm{d}\omega_1 =
(I_{33}\omega_2-J_1\omega_2-I_{23})(J_1^2-I_{23}^2-J_1I_{33})\, \mathrm{d}\omega_2
\]
and we integrate both sides independently. After  multiplication by
two we obtain first integral
\begin{equation*}
\tilde  G(\omega_1,\omega_2)=J_1 (J_1 - I_{33})^2 \omega_1^2+(J_1 - I_{33}) (J_1^2 - I_{23}^2 - J_1 I_{33})\omega_2^2+2 I_{23} (J_1^2 - I_{23}^2 - J_1 I_{33})\omega_2.
  \end{equation*}
Notice that the coefficient of $\omega_2^2$ can be written as $I_{22}(J_1 - I_{33})^2$ and is therefore
positive (since the matrix $\I$ is positive definite). It follows that $\tilde G$ is positive definite and its
level sets are ellipses in the $\omega_1 \, \omega_2$ plane.
In order to integrate the system explicitly we perform a change of variables that puts the equations  in a simpler form. We introduce the variables $\xi, \eta$ by the relations
\begin{equation}
\label{E:Change-variables}
\omega_1=\frac{\xi}{I_{23}}, \qquad \omega_2=\frac{\eta}{I_{23}}+\frac{J_1-I_{33}}{I_{23}}.
\end{equation}
Then, the system \eqref{eq:spec} takes the form
\begin{equation}
\label{E:SimpEq}
\xi'=-A(\eta-B)\eta, \qquad \eta'=C\xi \eta,
\end{equation}
where the  constants $A, \, B, \, C$ are given by
\begin{equation*}
A=\frac{1}{J_1}, \qquad B=J_0-J_1, \qquad C=\frac{1}{I_{22}},
\end{equation*}
with $I_{22}$ and $J_0$ given respectively by \eqref{eq:I22} and \eqref{E:ThirdEigenvalue}.
The conserved quantity takes the simple form
\begin{equation*}
G(\xi,\eta)=A(\eta-B)^2+C\xi^2.
\end{equation*}
Under our assumptions, we have $A, \, C>0$ and $B\neq 0$. Notice that, similar to the classical 
Suslov problem, the system \eqref{E:SimpEq} 
possesses a continuum  of equilibrium points along the line $\eta=0$. On the other 
hand, the level sets of the conserved quantity $G$ are ellipses centered at the point $(0,B)$ in the
$\xi \eta$ plane, that is itself another equilibrium of the system. The number of intersections of these ellipses with the line $\eta=0$ will depend 
on the specific value of $G$. Let 
\begin{equation*}
G_0=AB^2=\frac{(J_0-J_1)^2}{J_1}>0.
\end{equation*}
Then, 
\begin{enumerate}
\item For $0<G<G_0$ there are no intersections of the level sets of $G$ with the line $\eta=0$.
The solutions are periodic and we shall see that they are expressed as a ratio of trigonometric functions.
\item The level set $G=G_0$  touches the line $\eta=0$ with multiplicity two at  the equilibrium point $(0,0)$. The level set is an orbit  homoclinic  to $(0,0)$ and we will see that 
the solution along this orbit is a rational function of $\tau$.
\item For values $G>G_0$ the ellipses cut the line $\eta=0$ at the  two equilibrium points \linebreak 
$\left (\pm \sqrt{\frac{G-AB^2}{C}}, 0 \right )$. The arcs of the ellipse connecting these points are two 
 heteroclinic orbits. The solutions along these orbits are expressed in terms of hyperbolic functions.
\end{enumerate}
The level set $G=0$ obviously consists of the individual equilibrium point $(0,B)$.
A schematic picture of the phase portrait is shown in Figure  \ref{F:phaseport}  below under the
assumption that $B>0$.
\begin{figure}[ht]
  \centering
  \includegraphics[width=8cm]{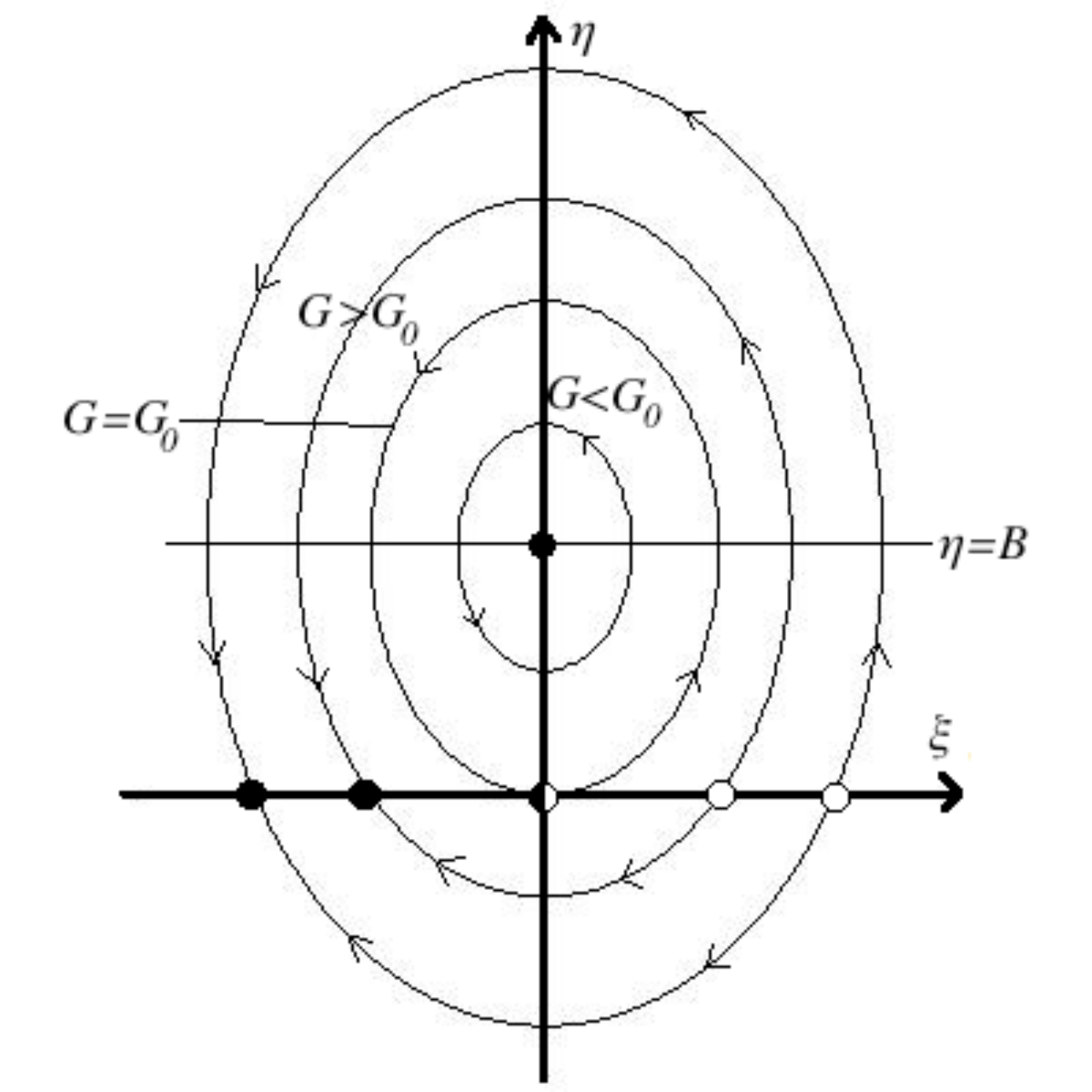}
  \caption{{Schematic representation of the phase portrait of equations  \eqref{E:SimpEq} under the assumption that
  $B>0$.}\label{F:phaseport} }
\end{figure}

{\bf Explicit solutions to \eqref{E:SimpEq}.}

The dependence of $\xi$ and $\eta$ on $\tau$ along the contour line $G(\xi,\tau)=g$ can be
obtained by introducing the parametrization:
\begin{equation}
\label{E:param}
\xi = \sqrt{\frac{g}{C}}\frac{2\psi}{\psi^2+1}, \qquad \eta =\pm \sqrt{\frac{g}{A}}\left ( \frac{1-\psi^2}{\psi^2+1}
\right ) + B.
\end{equation}
Substitution into \eqref{E:SimpEq} yields the separable equation for $\psi(\tau)$
\begin{equation*}
\psi'=K_1\psi^2+K_2
\end{equation*}
for certain constants $K_1$ and $K_2$ that satisfy $\mbox{sign}(K_1K_2)=\mbox{sign}(G_0-g)$.
Hence, as expected, the form of the solutions will depend on how $g$ compares to $G_0=AB^2$.

For $0<g<G_0$ it is enough to consider one branch of the parametrization \eqref{E:param}. Using
the ``-" branch and simplifying the algebra, one ends up with the explicit solution
\begin{equation*}
\begin{split}
\xi(\tau)&=\sqrt{\frac{g}{C}}\left ( \frac{\sqrt{AB^2-g}\sin ( \sqrt{C}\sqrt{AB^2-g}\, \tau)}{\sqrt{A}B+\sqrt{g}\cos  ( \sqrt{C}\sqrt{AB^2-g}\, \tau)} \right ), \\ \eta(\tau)&=-\sqrt{\frac{g}{A}}\left ( \frac{\sqrt{g}+\sqrt{A}B\cos  ( \sqrt{C}\sqrt{AB^2-g}\, \tau)}{\sqrt{A}B+\sqrt{g}\cos  ( \sqrt{C}\sqrt{AB^2-g}\, \tau)} \right )+B.
\end{split}
\end{equation*}

If $g=G_0=AB^2$ the same substitution yields the solution 
\begin{equation*}
\xi(\tau)=\frac{-2AB^2\tau}{ACB^2\tau^2+1}, \qquad \eta(\tau)=\frac{2B}{ACB^2\tau^2+1}.
\end{equation*}

For values of $g$ bigger than $G_0=AB^2$ one needs to consider both branches of the parametrization
to account for the two heteroclinic connections. The solution along the branch on the positive $\eta$-plane is
given by
\begin{equation*}
\begin{split}
\xi(\tau)&=-\sqrt{\frac{g}{C}}\left ( \frac{\sqrt{g-AB^2}\sinh ( \sqrt{C}\sqrt{g-AB^2}\, \tau)}{-\sqrt{A}B+\sqrt{g}\cosh  ( \sqrt{C}\sqrt{g-AB^2}\, \tau)} \right ), \\ \eta(\tau)&=\sqrt{\frac{g}{A}}\left ( \frac{\sqrt{g}-\sqrt{A}B\cosh  ( \sqrt{C}\sqrt{g-AB^2}\, \tau)}{-\sqrt{A}B+\sqrt{g}\cosh  ( \sqrt{C}\sqrt{g-AB^2}\, \tau)} \right )+B.
\end{split}
\end{equation*}
whereas the solution along the negative $\eta$-plane is given by
\begin{equation*}
\begin{split}
\xi(\tau)&=-\sqrt{\frac{g}{C}}\left ( \frac{\sqrt{g-AB^2}\sinh ( \sqrt{C}\sqrt{g-AB^2}\, \tau)}{\sqrt{A}B+\sqrt{g}\cosh  ( \sqrt{C}\sqrt{g-AB^2}\, \tau)} \right ), \\ \eta(\tau)&=-\sqrt{\frac{g}{A}}\left ( \frac{\sqrt{g}+\sqrt{A}B\cosh  ( \sqrt{C}\sqrt{g-AB^2}\, \tau)}{\sqrt{A}B+\sqrt{g}\cosh  ( \sqrt{C}\sqrt{g-AB^2}\, \tau)} \right )+B.
\end{split}
\end{equation*}

\section{Painlev\'e property of the solutions}
\label{S:Painleve}

In this section we continue analyzing the system \eqref{E:MotionNondim} under the assumption $I_{13}=0$ that, as explained before,
physically corresponds to the supposing that the
vector $\bf a$ is orthogonal to a principal axis of inertia of the body. We shall prove
\begin{theorem}
All solutions of equations  \eqref{E:Motion-simpler} 
are single valued if and only if either $I_{23}=0$, or the eigenvalue $J_1=I_{11}$ coincides with 
$J_2$ or $J_3$ (see equation \eqref{E:Eigenvalues}).
\end{theorem}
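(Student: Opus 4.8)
The plan is to handle the three regimes separately; the two sufficiency (``if'') cases are essentially already settled by the earlier sections. If $I_{23}=0$, then \eqref{E:Motion-simpler} is a linear system with constant coefficients (Section~\ref{S:a-eigenvector}), so every solution is an entire function of $\tau$, hence single valued. If $J_1=J_2$ or $J_1=J_3$ (equivalently $P=0$), the explicit integration performed in Section~\ref{SS:EqualMoments} writes every solution as a rational function of $\tau$, or as a ratio of entire trigonometric or hyperbolic functions of $\tau$; each of these is meromorphic on $\C$, hence single valued. (The subcase $P=0$ with $J_1=I_{33}$ reduces to $I_{23}=0$, already covered.) So the substance of the proof is the necessity direction.

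For necessity, assume $I_{23}\neq 0$ and $J_1\notin\{J_2,J_3\}$, and I would run a Painlev\'e (singularity) analysis of \eqref{E:Motion-simpler}. First, seek a dominant balance $\omega_j(\tau)\sim a_j(\tau-\tau_0)^{-1}$ near a movable point $\tau_0$; substitution forces $a_1=-I_{22}/I_{23}$ and $a_2^2=-J_1I_{22}/I_{23}^2$, and one checks this is the only balance with a pole ($a_2=0$ would force $a_1=0$ as well). Second, linearising about the leading term shows the resonances (Kovalevskaya exponents) are the roots of $r^2-r-2=0$, namely $r=-1$ (the arbitrariness of $\tau_0$) and $r=2$. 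Since both are integers, the only possible obstruction to single valuedness is a logarithmic one at $r=2$. Third, build the Laurent series $\omega_j=\sum_{k\ge 0}a_{j,k}(\tau-\tau_0)^{k-1}$: the coefficients $a_{j,1}$ are determined at order $k=1$, while at order $k=2$ the linear system for $(a_{1,2},a_{2,2})$ has a singular matrix, and its Fredholm compatibility condition reduces, after simplification, to
\[
(I_{22}-I_{33})^2+4I_{23}^2=(I_{22}+I_{33}-2J_1)^2 ,
\]
which by \eqref{E:Eigenvalues} is exactly $(J_2-J_3)^2=(J_2+J_3-2J_1)^2$, i.e. $J_1\in\{J_2,J_3\}$. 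Under our hypothesis this fails, so the expansion must be modified by a term proportional to $(\tau-\tau_0)^2\log(\tau-\tau_0)$; the resulting solution has a movable logarithmic branch point and is therefore multi-valued, which is what we needed.

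The delicate but routine part is the algebra in the third step: the compatibility condition involves several contributions that must collapse so as to coincide precisely with the spectral condition $J_1\in\{J_2,J_3\}$, and sign handling requires care since $a_2$ is purely imaginary. The genuinely non-elementary point — the main obstacle — is the implication ``compatibility fails $\Rightarrow$ there is an honest multi-valued solution''. To make this rigorous one should either appeal to the standard theory underlying the ARS/Painlev\'e test (for an autonomous polynomial system, single valuedness of all solutions forces the test to be passed with every non-negative-integer resonance compatible), or argue constructively that the log-modified series converges near $\tau_0$ and defines a solution whose branch changes under analytic continuation once around $\tau_0$. I would organise the argument so that this is the only place where an external result is invoked.
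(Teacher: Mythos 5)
Your sufficiency argument coincides with the paper's (linear system when $I_{23}=0$; explicit meromorphic solutions from Section~\ref{SS:EqualMoments} when $P=0$). For necessity your route is genuinely different: you run the Kovalevskaya/ARS singularity analysis directly on the first-order system \eqref{E:Motion-simpler} (leading balance $\omega_j\sim a_j(\tau-\tau_0)^{-1}$ with $a_1=-I_{22}/I_{23}$, $a_2^2=-J_1I_{22}/I_{23}^2$, resonances $r=-1,2$, compatibility condition at $r=2$), and your computations check out: the condition you state is equivalent to $I_{23}^2=(J_1-I_{22})(J_1-I_{33})$, i.e.\ to $P=0$ in \eqref{E:DefP}, i.e.\ to $J_1\in\{J_2,J_3\}$. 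The paper instead passes to the single second-order equation \eqref{eq:ddW} for $\eta=I_{23}\omega_2+I_{33}-J_1$, where the obstruction is the coefficient $p_1\propto P$, and applies Painlev\'e's $\alpha$-method: after the scaling $\eta=u/\alpha$, $z=\alpha\zeta$, the leading equation is solved in closed form (a solution with a movable simple pole), and a residue computation shows that the second perturbation term $u_2$ acquires a logarithm exactly when $p_1\neq 0$; the conclusion then rests on Painlev\'e's classical theorem that single-valuedness of all solutions forces each coefficient $u_i$ of the $\alpha$-expansion to be single-valued (see \cite{Ince}, Ch.~XIV). What the $\alpha$-method buys is precisely the step you yourself flag as the main obstacle: the implication ``the ARS compatibility condition fails $\Rightarrow$ some solution is genuinely multivalued'' is not a theorem as stated — the ARS test is a necessary-condition heuristic unless one proves that the logarithm-modified psi-series converges and represents an actual solution branched at $\tau_0$. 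So to make your argument airtight you must carry out your second option (convergence of the logarithmic psi-series via majorant-type arguments or an existence theorem for psi-series solutions), or replace that final step by the $\alpha$-method as the paper does; with either of these fixes your proof is a correct, somewhat more computational alternative that avoids the reduction to a scalar second-order equation and exhibits the obstruction as a resonance condition rather than a residue.
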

\begin{proof}
 In Section  \ref{S:a-eigenvector} we proved that if $I_{23}=0$ then
the equations become linear and homogeneous. On the other hand, in Section  \ref{SS:EqualMoments}, where 
the assumption that $J_1=J_2$ or $J_1=J_3$ was made, we
gave explicit meromorphic expressions for all the solutions. Hence, the only thing that remains
to prove is  that the system  \eqref{E:Motion-simpler} has multi-valued solutions
in the case considered in Section \ref{SS:Unequal-Moments} (where $I_{23}\neq 0$ and $J_1\neq J_2, J_3$).

  If we transform system \eqref{E:MotionNondim} into a second order
equation, then we obtain
\[
\omega_2''=\dfrac{I_{23}(\omega_2')^2}{I_{23}\omega_2+I_{33}-J_1}+\dfrac{1}{J_1I_{22}}
(J_1-I_{33}-I_{23}\omega_2)[(I_{33}-I_{22})\omega_2+I_{23}(\omega_2^2-1)],
\]
or
\begin{equation}
  \eta''=\frac{ (\eta')^2}{\eta}+\frac{1}{J_1I_{22}}\eta[-\eta^2+(I_{22}+I_{33}-2J_1)\eta+I_{23}^2 - (J_1 - I_{22}) (J_1 - I_{33})],
  \label{eq:ddW}
\end{equation}
where, just like in \eqref{E:Change-variables}, we have
\[
\eta=I_{23}\omega_2+I_{33}-J_1.
\]
We rewrite equation~\eqref{eq:ddW} in terms of the independent complex variable $\tau=z$ in the form 
\begin{equation}
\label{eq:1}
\frac{\rmd^2 \eta}{\rmd z^2}=\frac{1}{\eta} \left( \frac{\rmd \eta}{\rmd z} \right)^2+p_3 \eta^3 +p_2\eta^2 +p_1\eta. 
\end{equation}
Note that, up to a non-vanishing factor, the coefficient $p_1$ coincides with $P$ defined by  \eqref{E:DefP}. Recall that 
the condition that $P=0$ is equivalent to
saying that $J_1$ coincides with $J_2$ or $J_3$. Hence, by our analysis in Section \ref{SS:EqualMoments}
we conclude that if $p_1=0$ all solutions are single-valued. We note in passing that in this
case ~\eqref{eq:ddW} takes the form of equation XII in
Painlev\'e-Gambier classification~\cite{Ince,Gambier} that is well-known to have all
solutions single-valued. 

 We shall now prove  that if $p_1\neq 0$,  there exists a  solution to  \eqref{eq:1} with a movable logarithmic singular point.  We apply the $\alpha$-method of Painlev\'e, see Chapter XIV in \cite{Ince}. Let us introduce new variables 
\begin{equation*}
\label{eq:2}
\eta= \frac{1}{\alpha}u, \qquad z= \alpha \zeta. 
\end{equation*} 
The transformed equation reads
\begin{equation*}
\label{eq:3}
\frac{\rmd^2 u}{\rmd \zeta^2}=\frac{1}{u} \left( \frac{\rmd u}{\rmd \zeta} \right)^2+p_3 u^3 +\alpha p_2u^2 +\alpha^2 p_1 u. 
\end{equation*}
It has a solution of the form 
\begin{equation*}
\label{eq:4}
u(\zeta) = u_0(\zeta)+ \alpha u_1(\zeta) + \alpha^2 u_2(\zeta)+ \cdots, 
\end{equation*}
where the dots denote higher order terms.  
If all solutions of equation~\eqref{eq:1} are single valued, 
then $u_i(\zeta)$ must be single valued for all $i\geq 0$. The function $u_0(\zeta)$
is a solution of the equation
\begin{equation}
\label{eq:5}
\frac{\rmd^2 u_0}{\rmd \zeta^2}=\frac{1}{u_0} \left( \frac{\rmd u_0}{\rmd \zeta} \right)^2+p_3 u_0^3. 
\end{equation}
On the other hand, the functions $u_i(\zeta)$ with $i>0$ are solutions of the following linear non-homogeneous  equations
\begin{equation}
\label{eq:8}
 u_i''=\frac{2u_0'(\zeta)}{u_0(\zeta)} u_i' + \left[3p_3u_0(\zeta)^2 - \left( \frac{u_0'(\zeta)}{u_0(\zeta)} \right)^2\right] u_{i} +b_i(\zeta),
\end{equation}
where $b_1(\zeta):= p_2 u_0(\zeta)^2 $, and 
\begin{equation*}
\label{eq:9}
b_2:=  u_0 \left(p_1 +2p_2 u_1 +3p_3u_1^2 \right) + 
\frac{u_0'^2u_1^2}{u_0^3} - 2\frac{u_0'u_1 u_1'}{u_0^2} +
\frac{ u_1'^2}{u_0}. 
\end{equation*}
In general $b_i$ will  depend on $u_j$ with $j<i$.  

Let $v_1$ and $v_2$ be linearly independent solutions of homogeneous part of 
equation~\eqref{eq:8} and $V$ its fundamental matrix, i.e., 
\begin{equation*}
\label{eq:10}
V =\begin{bmatrix}
 v_1  & v_2 \\
v_1' &  v_2'
\end{bmatrix}.
\end{equation*}
Then, the  solution of~\eqref{eq:8} is given by 
\begin{equation*}
\label{eq:11}
\begin{bmatrix}
 u_i(\zeta)  \\
u_i'(\zeta) 
\end{bmatrix}=
V(\zeta)\int^{\zeta} V^{-1}(x) \begin{bmatrix}
 0  \\
b_i(x) 
\end{bmatrix}\,\rmd x .
\end{equation*}

The general solution of equation~\eqref{eq:5}  has the form
\begin{equation*}
\label{eq:6}
u_0(\zeta) = \frac{2a^2\exp[a \zeta +b]}{\exp[2(a \zeta +b)] - a^2p_3 }.
\end{equation*}
It has a simple pole at point $\zeta=c$ where $c$ is defined by the condition
\begin{equation*}
\label{eq:7}
\exp[2(a c +b)] =a^2 p_3.
\end{equation*}
It follows that if the solutions of~\eqref{eq:8} have a  branch points, then they
are located either  at $\zeta=c$ or at infinity.

It is easy check that $u_1(\zeta)$ is always single valued but
\begin{equation*}
\label{eq:12}
\operatorname{residue}\left(  V^{-1}(\zeta) \begin{bmatrix}
 0  \\
b_2(\zeta) 
\end{bmatrix}, c \right)= \begin{bmatrix}
- 2\exp(-ac-b)p_1  \\
\exp(-ac-b)p_1
\end{bmatrix}.
\end{equation*} 
Thus, if $p_1\neq 0 $ then logarithmic terms are present. This completes the proof.
\end{proof}

%
%
%
%
%
%



\begin{thebibliography}{99}
\let\\, \newcommand{\by}[1]{\textsc{\ignorespaces #1}\\}
  \newcommand{\title}[1]{\textsl{\ignorespaces #1}\\}
  \newcommand{\vol}[1]{{\bf{\ignorespaces #1}}}
  \newcommand{\info}[1]{\textrm{\ignorespaces #1}.}

\bibitem{BorMamKil} \by{Borisov A~V, Mamaev I~S and Kilin A~A} \title{Hamiltonicity and integrability of the Suslov problem}
 \info{ Regul. Chaotic Dyn. \vol{16} (2011),  104--116}
 
 \bibitem{Gambier} \by{Gambier B} \title{Sur les  \'equations diff\'erentielles du second ordre et du premier degr\'e dont l'int\'egrale g\'en\'erale est \`a points critiques fixes}
 \info{ Acta Math. \vol{33} (1910),  1--55}

\bibitem{Ince} \by{Ince, E. L.} \title{Ordinary {D}ifferential {E}quations}, \info{New York, Ordinary {D}ifferential {E}quations, 1944}

\bibitem{FMP}\by{Fedorov Y~N,  Maciejewski A~J and Przybylska M}\title{The Poisson equations in the nonholonomic Suslov problem: integrability, meromorphic and hypergeometric solutions}  \info{Nonlinearity \vol{22} (2009), 2231--2259}

\bibitem{Painleve} \by{Painlev\'e P} \title{M\'emoire sur les \'equations diff\'erentielles dont l'int\'egrale g\'en\'erale est uniforme,}
 \info{ Bulletin de la S. M. F. \vol{28} (1900), 201--261}

\bibitem{Painleve1} \by{Painlev\'e P} \title{Sur les \'equations diff\'erentielles du second ordre  et d'ordre sup\'erieur dont l'int\'egrale g\'en\'erale est uniforme}
 \info{ Acta Math. \vol{25} (1902),  1--85}

\bibitem{Suslov} \by{Suslov, G.~K} \title{Theoretical Mechanics}, \info{Moscow, Gostekhizdat, 1946 (Russian)}

\end{thebibliography}
\end{document}